\documentclass[conference,letterpaper]{IEEEtran}
\addtolength{\topmargin}{9mm}

\usepackage[utf8]{inputenc} 
\usepackage[T1]{fontenc}
\usepackage{url}
\usepackage{ifthen}
\usepackage{cite}
\usepackage[cmex10]{amsmath}
\usepackage{amssymb}
\usepackage{xcolor}
\interdisplaylinepenalty=2500

\usepackage{graphicx}
\usepackage[font=small,labelfont={bf,it}, textfont=it]{caption}
\usepackage{subcaption}
\hyphenation{op-tical net-works semi-conduc-tor}

\usepackage{amsthm}
\theoremstyle{plain}
\newtheorem{thm}[]{Theorem}

\theoremstyle{plain}
\newtheorem{cor}{Corollary}

\theoremstyle{definition}
\newtheorem{defn}{Definition}

\theoremstyle{remark}

\DeclareMathOperator*{\E}{\mathbb{E}}
\let\P\relax
\DeclareMathOperator*{\P}{\mathbb{P}}
\DeclareMathOperator*{\R}{\mathbb{R}}
\DeclareMathOperator*{\N}{\mathbb{N}}

\begin{document}
\title{One-Shot Wyner--Ziv Compression of a Uniform Source} 

\author{
  \IEEEauthorblockN{Oğuzhan Kubilay Ülger and Elza Erkip}
  \IEEEauthorblockA{Dept. of Electrical and Computer Engineering\\
                    New York University, New York, USA\\
                    Email: \{kubi, elza\}@nyu.edu}
}

\maketitle

\begin{abstract}
   In this paper, we consider the one-shot version of the classical Wyner-Ziv problem where a source is compressed in a lossy fashion when only the decoder has access to a correlated side information. Following the entropy-constrained quantization framework, we assume a scalar quantizer followed by variable length entropy coding. We consider compression of a uniform source, motivated by its role in the compression of processes with low-dimensional features embedded within a high-dimensional ambient space. We find upper and lower bounds to the entropy-distortion functions of the uniform source for quantized and noisy side information, and illustrate tightness of the bounds at high compression rates. 
\end{abstract}

\section{Introduction}

In their celebrated work, Slepian and Wolf \cite{SW} showed the surprising result that lossless distributed compression of correlated sources can be as efficient as their joint compression. This result is asymptotic and relies on a random binning argument. This random binning technique has also been used to establish many achievability theorems in asymptotic distributed source coding problems \cite{WZ,elgamalBook}. 

One of the most well-studied distributed lossy compression problems is the one by Wyner and Ziv\cite{WZ}. In the so-called Wyner--Ziv setting, a source is compressed in lossy fashion while a correlated side information (SI) is only available at the decoder. The focus in \cite{WZ} is the classical asymptotic blocklength setting.

Motivated by practical compression techniques that operate in the finite blocklength regime, in this paper, we are interested in the one-shot version of the Wyner--Ziv problem where the encoder first quantizes a single realization of the source into a countable set and then uses variable length lossless coding to turn it into a bitstream. The decoder wishes to reconstruct the source with the help of a correlated SI. We will call this setting, shown in Fig.~\ref{fig:1b}, the one-shot decoder-only SI case. The goal is to minimize the expected length of codewords under some average distortion constraint. This approach differs from nonasymptotic fixed length compression considered in \cite{kostina1, poisson}, where the rate solely depends on the number of quantization bins.

It is well known that when prefix-free lossless codes are used in the point-to-point (no SI) one-shot compression problem (Fig.~\ref{fig:1a}), the minimum expected codeword length is lower bounded by the entropy of the quantized version of the source. Furthermore, with optimal entropy coding such as Huffman coding \cite{cover}, the expected codeword length becomes at most 1-bit away from this lower bound. This 1-bit gap vanishes asymptotically as many quantized samples are entropy coded jointly. Hence looking at the entropy of quantizer output is a good metric for judging the actual compression rate. This observation led to the popular class of lossy compressors known as entropy-constrained scalar \cite{sullivanLap} and vector \cite{philChou} quantizers, and the associated entropy--distortion bounds \cite{gyorgyUnif,wagnerSawbridge,wagnerRamp}, studied in detail in the point-to-point setting of Fig.~\ref{fig:1a}. However, there is limited work in the literature that incorporates SI~\cite{WZ_Servetto,WZ_Zhenyu}. 

\begin{figure}[t]
\centering
\begin{subfigure}[b]{0.40\textwidth}
   \includegraphics[width=1\linewidth]{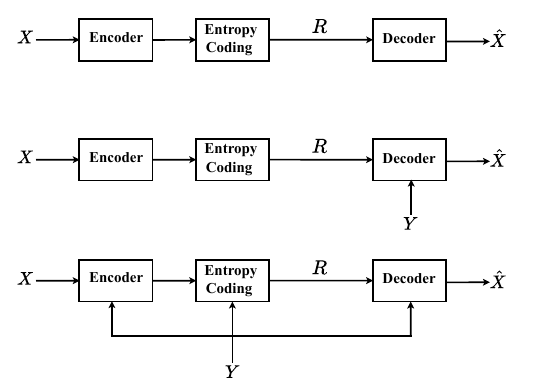}
   \caption{}
   \label{fig:1a} 
\end{subfigure}
\begin{subfigure}[b]{0.40\textwidth}
   \includegraphics[width=1\linewidth]{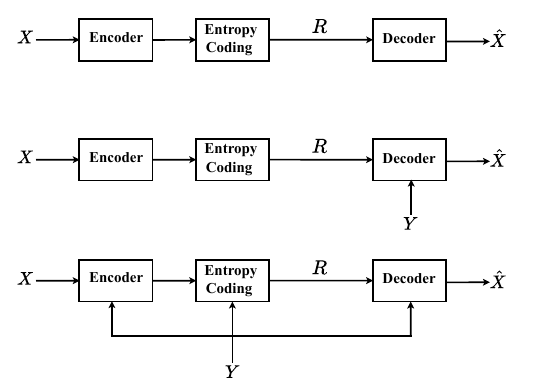}
   \caption{}
   \label{fig:1b}
\end{subfigure}
\begin{subfigure}[b]{0.40\textwidth}
   \includegraphics[width=1\linewidth]{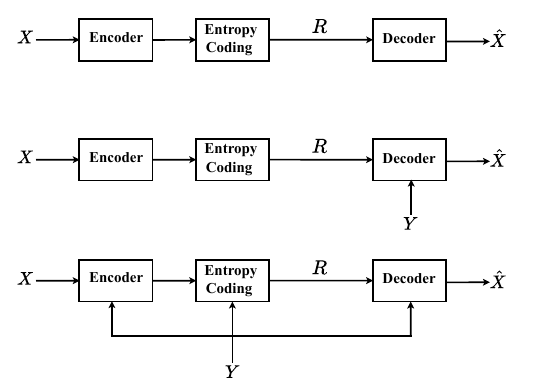}
   \caption{}
   \label{fig:1c}
\end{subfigure}
\caption{One-shot lossy compression systems considered in this paper: (a)  point-to-point, (b) decoder-only side information,  (c) both encoder and decoder side information.}
\end{figure}

Recently Wagner and Ball\'e\cite{wagnerSawbridge} and Bhadane et. al.~\cite{wagnerRamp} investigated compression of processes that are characterized by a low-dimensional manifold structure within a high-dimensional ambient space such as images \cite{balleImage}. As argued in~\cite{wagnerSawbridge,wagnerRamp}, comparing entropy--distortion functions of these processes with operational rate-distortion bounds of learned compressors allows uncovering capabilities of practical neural image compressors. It is shown in \cite{wagnerSawbridge,wagnerRamp} that compression of the aforementioned processes is equivalent to compression of a uniform source under different distortion metrics. Motivated by this observation and the emergence of neural compressors in decoder-only SI settings \cite{ezgiICML,ezgiISIT,ezgiJsait}, we study lossy one-shot compression of a uniform source with SI.  We consider two different SI models. In the first one, the SI represents a coarsely quantized version of the source to be refined. In the second one, we consider a noisy version of the source as the SI. For each SI, we characterize and provide lower and upper bounds for entropy--distortion trade-offs. We believe that this study can be a first step in establishing entropy-distortion bounds for distributed compression of processes like those in~\cite{wagnerSawbridge,wagnerRamp}, with potential implications on the design of distributed neural compressors for natural sources such as images.

The rest of the paper is organized as follows. In Section~\ref{sec:Sys} we formally define our problem and setting. In Section~\ref{sec:Main} we present our results on the entropy--distortion trade-off for a uniform source with SI. Section~\ref{sec:disc} finishes the paper with an interpretation of results and discussion.

\section{System Model} \label{sec:Sys}

Before considering the SI setting, we first present definitions for encoder, decoder and entropy--distortion function as in \cite{wagnerSawbridge, wagnerRamp} which considered point-to-point compression as shown in Fig.~\ref{fig:1a}.

\begin{defn}
\label{def:one-shot}
    Let $X$ be a random variable with distribution $P_X$ defined on the alphabet $\mathcal{X}$ and $\mathcal{\hat{X}}$ be the reconstruction alphabet. A  point-to-point one-shot compression of $X$ under distortion metric $d:\mathcal{X} \times \mathcal{\hat{X}} \rightarrow \R_{\geq 0}$ consists of an encoder $f : \mathcal{X} \rightarrow \N$ and a decoder $g: \N \rightarrow \mathcal{\hat{X}}$. The entropy of this encoder-decoder pair is given by
    \begin{align*}
        H(f) &= -\sum_{i}\P[f(X)=i]\log(\P[f(X)=i])
    \end{align*}
    and the distortion $D(f,g) \!= \!\E[d(X,\hat{X})]$ where $\hat{X} = g(f(X))$.
\end{defn} 

\begin{defn}
\label{def:one-shot-ed}
    Given a source $X \sim P_X$ and a distortion metric $d(\cdot,\cdot)$, the point-to-point entropy--distortion function is given by
    \begin{align*}
        H(\Delta) = \inf_{f} H(f) 
    \end{align*}
    where infimum is taken over all encoders $f$ such that there exists a decoder $g$ with $D(f,g) \leq \Delta$ for some $\Delta \in \R_{\geq 0}$, with $f$ and $g$ as in Definition~\ref{def:one-shot}.
\end{defn}

A natural extension of Definitions~\ref{def:one-shot} and~\ref{def:one-shot-ed} to the decoder-only SI case in Fig.~\ref{fig:1b} considers entropy coding after quantization, similar to the point-to-point setting. This is reflected in the definitions below.

\begin{defn}
\label{def:dSI}
    Let $X$ be a random variable representing the source, $Y$ be a correlated random variable representing the SI available only at the decoder with joint distribution $P_{XY}$  on the product alphabet $\mathcal{X} \times \mathcal{Y}$, and $\mathcal{\hat{X}}$ be the reconstruction alphabet. A one-shot compression of $X$ with decoder-only SI $Y$ under distortion metric $d:\mathcal{X} \times \mathcal{\hat{X}} \rightarrow \R_{\geq 0}$ consists of an encoder $f : \mathcal{X} \longrightarrow \N$ and a decoder $g : \N \times \mathcal{Y} \longrightarrow \mathcal{\hat{X}}$. The entropy of this encoder-decoder pair is given by
    \begin{align*}
        H(f) = -\sum_{i}\P[f(X)=i]\log(\P[f(X)=i])
    \end{align*}
    and the distortion $D_{SI}(f,g) = \E[d(X,\hat{X})]$ where $\hat{X} =\ g(f(X),Y)$.
\end{defn}
\begin{defn}
\label{def:dSI-ed}
    Given a source $X$, SI $Y$ and distortion metric $d(\cdot,\cdot)$ the entropy--distortion function with decoder-only SI is given by
    \begin{align*}
        H(\Delta) = \inf_{f} H(f) 
    \end{align*}
    where the infimum is taken over all encoders $f$ such that there exists a decoder $g$ with $D_{SI}(f,g) \leq \Delta$ for some $\Delta \in \R_{\geq 0}$, with $f$ and $g$ as in Definition~\ref{def:dSI}.
\end{defn}

As an alternative to entropy coding, one could use a lossless Slepian--Wolf compressor following the encoder in Fig.~\ref{fig:1b}, resulting in rate $H(f|Y)$ (see~\cite{SW}) in Definition~\ref{def:dSI} rather than $H(f)$.  While practical codes for the Slepian--Wolf problem are available~\cite{WZ_Pradhan,SW-Prac4,SW-Prac3,SW-Prac1}, their performances are characterized only for specific source distributions, and their use in practice is limited compared with entropy coding \cite{philChou,sullivanLap}. Therefore, for entropy--distortion function with decoder-only SI, we will follow Definition~\ref{def:dSI-ed} rather than the one in \cite{ezgiJsait} that assumes the availability of Slepian--Wolf coding. 

If the SI is available both at the encoder and the decoder as in Fig.~\ref{fig:1c}, then both quantization and entropy coding can use the particular realization of $Y$. This leads to the following definitions.

\begin{defn} \label{def:c1}
    Let $X$ be a random variable representing the source, $Y$ be a correlated random variable representing the SI available both at the encoder and decoder with joint distribution $P_{XY}$  on the product alphabet $\mathcal{X} \times \mathcal{Y}$, and $\mathcal{\hat{X}}$ be the reconstruction alphabet. A conditional one-shot compression of $X$ with  SI $Y$ under distortion metric $d:\mathcal{X} \times \mathcal{\hat{X}} \rightarrow \R_{\geq 0}$ consists of an encoder $f : \mathcal{X} \times \mathcal{Y} \longrightarrow \N$ and a decoder $g : \N \times \mathcal{Y} \longrightarrow \mathcal{\hat{X}}$. The conditional entropy of this encoder-decoder pair is given by
    \begin{align*}
        &H(f|Y)\!= \E\biggl[ \!-\!\!\sum_{i}\P[f(X)\!=\!i|Y\!=\!y]\log(\P[f(X)\!=\!i|Y\!=\!y]) \!\biggl]
    \end{align*}
    and the distortion $D_{C}(f,g)\!=\! \E[d(X,\hat{X})]$ where $\hat{X} = g(f(X,Y),Y)$.
\end{defn}
\begin{defn} \label{def:c2}
    Given a source $X$, SI $Y$ and distortion metric $d(\cdot,\cdot)$, the conditional entropy--distortion function is given by
    \begin{align*}
        H_{C}(\Delta) = \inf_{f} H(f|Y)
    \end{align*}
    where the infimum is taken over all encoders $f$ such that there exists a decoder $g$ with $D_{C}(f,g) \leq \Delta$ for some $\Delta \in \R_{\geq 0}$, with $f$ and $g$ as in Definition~\ref{def:c1}.
\end{defn}
Note that $H_C(\Delta)\leq H_{SI}(\Delta)$ since the encoder in Definition~\ref{def:c1} can always ignore the particular realization of $Y$ and only use its statistics as in Definition~\ref{def:dSI}. 

In the rest of this paper, we consider a source uniformly distributed on the interval $[0,1]$, i.e. $X \sim \text{Unif}([0,1])$. In \cite{wagnerSawbridge}, the authors show that compressing the random process called "sawbridge", which models processes with low-dimensional features embedded within a high-dimensional ambient space such as images, under squared error is equivalent to compressing the uniform source under $L_1$ distance, $d(x,\hat{x}) = |x-\hat{x}|$. Therefore, we will assume the $L_1$ distance as our distortion metric. However, we note that our results can be easily generalized to $d(x,\hat{x}) = |x-\hat{x}|^\rho$ for all $\rho \geq 1$.

We consider two different SI scenarios. In the first setting, we assume that the SI is a coarsely quantized version of the source. Specifically, the SI is $Y_q$ where
\begin{equation} \begin{split} \label{eq_qside}
    \P[Y_q = k] = \frac{1}{K}, \;\; k \in \{1,2,\ldots,K\}, \\
    P_{X|Y_q}(x | Y_q = k) \sim \text{Unif}\left(\left[\frac{k-1}{K},\frac{k}{K}\right]\right),
\end{split}
\end{equation}
for some integer $K \geq 1$. In the second setting, the SI is a noisy observation of the source
\begin{align*}
    Y_n = X + Z \; (\text{mod } 1)
\end{align*}
where $Z$ is independent of $X$ and distributed uniformly on $[-\alpha/2,\alpha/2]$ for some $\alpha \in [0,1/2]$. In both settings, given the SI, the source is distributed uniformly in a subset of $[0,1]$. Note the quantized version of the SI is a deterministic function of the source and hence can be obtained directly from $X$. In that case, without loss of generality, we assume $Y_q$ is also available at the encoder, leading to $H_{SI}(\Delta)=H_C(\Delta)$. Also, even though we consider a uniformly quantized SI, $Y_q$, our results in Theorem~\ref{thm:qach} hold for general non-uniformly quantized SI with a slight modification to the optimization problem.

\section{Main Results} \label{sec:Main}
We first present the point-to-point entropy--distortion function of a uniform source under $L_1$ distance as shown in \cite{gyorgyUnif} and \cite{wagnerSawbridge}.

\begin{thm}[\cite{gyorgyUnif,wagnerSawbridge}] \label{ThmPtP}
Consider the source $X \sim \text{Unif}([0,1])$ and $L_1$ distortion metric $d(x,\hat{x}) = |x-\hat{x}|$. Then the entropy distortion function of $X \sim \text{Unif}([0,1])$ is given by
    \begin{align*}
         H^U(\Delta) = -\biggl\lfloor \frac{1}{p} \biggl\rfloor p \log p - q \log q,  \quad &0 < \Delta \leq 1/4
    \end{align*}
    where $q = (1 - \big\lfloor \frac{1}{p} \big\rfloor p)$ and $p\in(0,1)$ is the unique solution to 
    \begin{align*}
        \biggl\lfloor \frac{1}{p}\biggl\rfloor p^2 + q^2 = 4\Delta.
    \end{align*}
\end{thm}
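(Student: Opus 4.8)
Since this characterization is due to \cite{gyorgyUnif} (and re-derived in \cite{wagnerSawbridge}), the natural plan is to recast the problem as an optimization over the cell probabilities and then solve it. An encoder $f$ amounts to a measurable partition $\{A_i\}_{i\in\N}$ of $[0,1]$, and because $X\sim\mathrm{Unif}([0,1])$ the entropy $H(f)=-\sum_i p_i\log p_i$ depends only on the measures $p_i=|A_i|$. The first step is to show that for a fixed collection $\{p_i\}$ the least achievable distortion is $\tfrac14\sum_i p_i^2$: for any measurable $A$ of measure $p$ and any $\hat x$, one has $\int_A|x-\hat x|\,dx\ge p^2/4$ (the best $\hat x$ is a median of $\mathrm{Unif}(A)$, and on each side of it the measure-$p/2$ part of $A$ contributes least when it is the interval abutting $\hat x$), with equality when $A$ is an interval reconstructed at its midpoint, and all cells can be made consecutive intervals simultaneously. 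Hence
\[
  H^U(\Delta)=\inf\Bigl\{ -\sum_i p_i\log p_i \;:\; p_i\ge0,\ \sum_i p_i=1,\ \sum_i p_i^2\le 4\Delta \Bigr\}.
\]

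Next I would argue the infimum is attained by a finitely supported $\mathbf p$: along a minimizing sequence the entropy is bounded, which gives a uniform tail bound and hence $\ell_1$-convergence of a subsequence, and lower semicontinuity yields a minimizer, which is finitely supported once we know it takes at most two distinct positive values. For $0<\Delta\le\tfrac14$ the distortion constraint is active at a minimizer: if $\sum_i p_i^2<4\Delta$ and $\mathbf p$ is not a point mass, moving a little probability from a smaller cell into the largest cell strictly lowers the entropy (the entropy's derivative along this move is $\log(p_{\mathrm{small}}/p_{\mathrm{large}})<0$) while keeping the constraint satisfied, and a point mass has $\sum_i p_i^2=1>4\Delta$. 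So a minimizer lies on $\sum_i p_i^2=4\Delta$; the first-order conditions there force every positive $p_i$ to solve $\log p_i=2\lambda p_i+\beta$ for fixed $\lambda,\beta$, and strict concavity of $t\mapsto\log t$ allows at most two distinct positive cell sizes.

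It then remains to single out the optimal two-valued configuration. Writing the larger size as $p$ with multiplicity $m$ and the smaller as $q<p$ with multiplicity $N$, feasibility reads $mp+Nq=1$ and $mp^2+Nq^2=4\Delta$, with entropy $-mp\log p-Nq\log q$; comparing the candidate families (smaller size appearing once, twice, \ldots) shows the optimum has $N=1$, and $q<p$ together with $mp+q=1$ then forces $m=\lfloor 1/p\rfloor$ (the degenerate case $q=0$ being the uniform $m$-cell quantizer). Finally one checks that $p\mapsto\lfloor 1/p\rfloor p^2+(1-\lfloor 1/p\rfloor p)^2$ is continuous and strictly increasing on $(0,1]$ with range $(0,1]$, so $\lfloor 1/p\rfloor p^2+q^2=4\Delta$ has a unique root $p\in(0,1)$ for each $\Delta\in(0,\tfrac14]$, and substitution gives $H^U(\Delta)=-\lfloor 1/p\rfloor p\log p-q\log q$; at $\Delta=\tfrac14$ this is $p=1$, i.e.\ a single cell reconstructed at $1/2$. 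I expect the crux to be precisely this last structural step — discarding the competing two-valued configurations and, before it, justifying that the infimum is attained over a non-compact feasible set — whereas the reduction to intervals and the closed-form bookkeeping are routine.
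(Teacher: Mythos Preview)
The paper does not supply its own proof of this theorem: it is stated with attribution to \cite{gyorgyUnif,wagnerSawbridge} and used as a black box in the subsequent results. So there is no in-paper argument to compare against; your proposal is a reconstruction of the proof in those references rather than an alternative to anything the present paper does.

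That said, your outline tracks the Gy\"orgy--Linder argument closely and is essentially correct: reduce to the probability vector via the interval/median lemma $\int_A|x-\hat x|\,dx\ge |A|^2/4$, obtain the constrained entropy minimization $\min\{-\sum p_i\log p_i:\sum p_i=1,\ \sum p_i^2\le 4\Delta\}$, argue the constraint binds, and use the KKT condition $\log p_i=2\lambda p_i+\beta$ together with strict concavity of $\log$ to force at most two distinct cell sizes. Your own caveat is accurate: the genuinely nontrivial step is ruling out the competing two-valued configurations (i.e.\ showing the smaller mass appears with multiplicity one), and your sentence ``comparing the candidate families \ldots\ shows the optimum has $N=1$'' is where the real work hides. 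The monotonicity/uniqueness check for $p\mapsto \lfloor 1/p\rfloor p^2+(1-\lfloor 1/p\rfloor p)^2$ is fine. If you want this to stand as a self-contained proof rather than a sketch, flesh out that comparison step and the existence-of-minimizer argument; otherwise, citing \cite{gyorgyUnif} as the paper does is the intended route.
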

It is shown in \cite{gyorgyUnif} that the encoder and decoder that achieve the entropy--distortion function in Theorem~\ref{ThmPtP} is obtained by dividing the unit interval into $N$ intervals with $N-1$ intervals of size $p$ and one interval of size $q$. The reconstructions are the mid-points of their corresponding intervals.

\subsection{Quantized Side Information}
In this section, we consider the quantized SI $Y_q$ which is a deterministic function of $X$. Given a realization of $Y_q$ as in (\ref{eq_qside}), both the encoder and the decoder know that $X$ is distributed uniformly on an interval of length $1/K$. Thus the encoder only needs to quantize $X$ over this interval. Gyorgy and Linder~\cite{gyorgyUnif} show that Theorem~\ref{ThmPtP} holds for any uniform distribution over an arbitrary interval after rescaling of quantization intervals and the distortion constraint. We use this in the next theorem to provide the conditional entropy--distortion function for the quantized SI.

\begin{thm}[Quantized Side Information] \label{thm:qach}
    Consider a $\text{Unif}([0,1])$ source and quantized SI model described in (\ref{eq_qside}) for some integer $K\geq 1$. The distortion metric is $L_1$, that is $d(x,\hat{x}) = |x-\hat{x}|$. Then the entropy--distortion function with decoder-only SI and the conditional entropy--distortion functions are equal, and are given by
    \begin{equation} \label{eq:opt}\begin{split}
            H_{SI}^q(\Delta) = H_{C}^q(\Delta) = &\min_{\{\Delta_k\}_{k=1}^{K}} \; \frac{1}{K}\sum_{k=1}^{K}H^U(K\Delta_k) \\
            \text{s.t. }\; &\frac{1}{K}\sum_{k=1}^{K}\Delta_k \leq \Delta, \\ &\Delta_k \geq 0 \text{ for all } k\in \{1,\ldots,K\},
        \end{split}
    \end{equation}
    where $H^U(\cdot)$ is the entropy--distortion function of the uniform source given in Theorem~\ref{ThmPtP}.
\end{thm}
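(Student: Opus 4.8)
The plan is to decompose the side-information problem into $K$ independent copies of the point-to-point problem of Theorem~\ref{ThmPtP}, one per value of $Y_q$, and then optimize the distortion budget across them. First comes a reduction: since $Y_q$ is a deterministic function of $X$ (the index of the length-$1/K$ subinterval containing $X$), the encoder can always recompute it, so without loss of generality $Y_q$ is available at the encoder and $H_{SI}^q(\Delta)=H_C^q(\Delta)$, as noted just before the theorem; it thus suffices to characterize $H_C^q(\Delta)$. I would also record the rescaled form of Theorem~\ref{ThmPtP}: the affine map $x\mapsto(x-a)/\ell$ is a bijection between quantizers of $\text{Unif}([a,a+\ell])$ and of $\text{Unif}([0,1])$ that preserves the output distribution (hence the entropy) and scales the $L_1$ distortion by $\ell$; consequently the entropy--distortion function of a uniform source on an interval of length $\ell$ at distortion $\delta$ is $H^U(\delta/\ell)$, which for $\ell=1/K$ is $H^U(K\delta)$.

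For achievability, fix any feasible $\{\Delta_k\}$, i.e. $\Delta_k\ge0$ with $\frac1K\sum_k\Delta_k\le\Delta$. Conditioned on $Y_q=k$ the source is $\text{Unif}([\tfrac{k-1}{K},\tfrac kK])$, so by the rescaled Theorem~\ref{ThmPtP} (and the explicit optimal quantizer construction behind it, described after that theorem) there is a scalar quantizer of bin $k$ with reconstruction achieving conditional distortion at most $\Delta_k$ and conditional entropy $H^U(K\Delta_k)$. Letting the conditional encoder apply the $k$-th such quantizer whenever $Y_q=k$ produces a scheme with distortion $\frac1K\sum_k\Delta_k\le\Delta$ and, directly from the definition of $H(f|Y)$ in Definition~\ref{def:c1} with $\P[Y_q=k]=1/K$, with $H(f|Y)=\frac1K\sum_kH^U(K\Delta_k)$. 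Taking the infimum over feasible $\{\Delta_k\}$ gives $H_C^q(\Delta)\le$ the value of (\ref{eq:opt}).

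For the converse, take any conditional pair $(f,g)$ with $D_C(f,g)\le\Delta$ and set $\Delta_k:=\E[\,|X-\hat X|\mid Y_q=k\,]$; then $\frac1K\sum_k\Delta_k=D_C(f,g)\le\Delta$, so $\{\Delta_k\}$ is feasible for (\ref{eq:opt}). Restricted to bin $k$, $(f(\cdot,k),g(\cdot,k))$ is an ordinary point-to-point scheme for $\text{Unif}([\tfrac{k-1}{K},\tfrac kK])$ at distortion $\Delta_k$, so the rescaled point-to-point converse gives $H(f\mid Y_q=k)\ge H^U(K\Delta_k)$; averaging with weights $1/K$ and using Definition~\ref{def:c1} once more, $H(f|Y)=\frac1K\sum_kH(f\mid Y_q=k)\ge\frac1K\sum_kH^U(K\Delta_k)$, which is at least the minimum in (\ref{eq:opt}). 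Taking the infimum over $(f,g)$ and combining with the achievability bound gives the asserted equality.

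These steps are individually routine; the part that needs genuine care — and where I expect to spend the most effort — is the handling of $H^U$ at the edges of the domain on which Theorem~\ref{ThmPtP} states it. I would extend $H^U$ by $H^U(\delta)=0$ for $\delta\ge1/4$ (the one-cell, reconstruct-the-midpoint quantizer has $L_1$ distortion $1/4$ on a unit interval) and $H^U(0)=+\infty$ (a finite-entropy quantizer of a continuous source cannot be lossless), and then check that $H^U$ is continuous on $(0,\infty)$ and lower semicontinuous at $0$. With that, the feasible set $\{\Delta_k\ge0:\sum_k\Delta_k\le K\Delta\}$ is compact and the objective is lower semicontinuous, so the minimum in (\ref{eq:opt}) is attained; moreover $H^U(K\Delta_k)\to\infty$ as $\Delta_k\to0$, which forces every $\Delta_k$ in a minimizer — and, in the converse, every $\Delta_k$ coming from a finite-entropy scheme — to be bounded away from $0$, legitimizing the per-bin appeals to Theorem~\ref{ThmPtP}. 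The reduction $H_{SI}^q=H_C^q$ and the additive decomposition of the conditional entropy over the $K$ bins are otherwise immediate from the definitions, and the $K$ sub-problems are coupled only through the single linear budget constraint, so no further structural argument is required.
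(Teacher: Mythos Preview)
Your proposal is correct and follows essentially the same route as the paper: decompose the conditional problem into $K$ per-bin point-to-point problems via $f_k=f(\cdot,k)$, $g_k=g(\cdot,k)$, invoke the rescaled version of Theorem~\ref{ThmPtP} on each length-$1/K$ subinterval to identify the optimal per-bin entropy as $H^U(K\Delta_k)$, and then optimize the distortion allocation $\{\Delta_k\}$ under the averaged budget. The only differences are cosmetic or in the direction of extra rigor---you make the achievability/converse split and the $H_{SI}^q=H_C^q$ reduction explicit, and you add the domain-edge discussion (extending $H^U$ by $0$ for $\delta\ge1/4$ and $+\infty$ at $0$, attainment of the minimum) that the paper leaves implicit.
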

\begin{proof}
By \cite{gyorgyUnif}, one shot optimal point-to-point encoder and decoder for $\Tilde{X}\sim\text{Unif}([a,b])$ are given as
\begin{align*}
    \Tilde{f}(x) &= f\left(\frac{x-a}{b-a}\right) \text{ and } \Tilde{g}(i) = (b-a)g(i)+a
\end{align*}
where $f$ and $g$ are optimal one-shot point-to-point encoder-decoder pair for $X\sim \text{Unif}([0,1])$ with distortion $\Delta$. Furthermore 
\begin{equation} \label{eq_scu} \begin{split}
    H(\Tilde{f}) &= H(f) = H^U(\Delta) \\
    D(\Tilde{f},\Tilde{g}) &=(b-a)\Delta.
\end{split}
\end{equation}
For any encoder and decoder of the conditional entropy--distortion with quantized SI problem $f(X,Y_q),g(f(X,Y_q),Y_q)$, let $f_k(X) = f(X,Y_q=k)$ and $g_k(f_k(X)) = g(f(X,Y_q=k),Y_q=k)$. Then the conditional entropy and distortion of $f,g$ can be written as
\begin{align*}
    H(f|Y) &= \frac{1}{K} \sum_{k=1}^{K}H^U(f_k) \\
    D(f,g) &= \frac{1}{K}\sum_{k=1}^{K}\E[d(X,\hat{X})|Y_q=k] = \frac{1}{K} \sum_{k=1}^{K}D(f_k,g_k)
\end{align*}
Note that the pair $(f_k,g_k)$ is tasked with quantizing a uniform random variable ($X$ given $Y_q=k$) over the interval $[(k-1)/K,k/K]$. Then under the constraint $D(f_k,g_k) = \Delta_k$, by (\ref{eq_scu}) the smallest entropy is $H(f_k)$ is $H^U(K\Delta_k)$. So given any collection of $\{\Delta_k\}_{k=1}^{K}$, the optimal entropy and distortion are given by
\begin{align*}
    H(f|Y) &= \frac{1}{K}\sum_{k=1}^{K}H^U(K\Delta_k) \\
    D(f,g) &= \frac{1}{K} \sum_{k=1}^{K}\Delta_k.
\end{align*}
Optimizing $H(f|Y)$ over all $\Delta_k$ satisfying $D(f,g) \leq \Delta$ completes the proof.
\end{proof}

The optimization problem in Theorem~\ref{thm:qach} is a non-convex one. In fact $H^U(\cdot)$ is strictly concave on the intervals $(\frac{1}{4(N+1)},\frac{1}{4N})$ for all integers $N\geq 1$ thus the minimizing the sum  is not trivial. The following corollary gives easy-to-compute upper and lower bounds for $H_{C}^{q}(\Delta)$.

\begin{cor} \label{cor:q}
    The conditional/decoder-only SI entropy--distortion function for the uniform source with quantized SI in Theorem~\ref{thm:qach} can be upper and lower bounded as
    \begin{align*}
       \Breve{H}^U(K\Delta) \leq   H_C^q(\Delta) \leq  H^U(K\Delta)
    \end{align*}
    where $H^U(\cdot)$ is the entropy--distortion function of the uniform source given in Theorem~\ref{ThmPtP} and $\Breve{H}^U(\cdot)$ is its the convex envelope.
\end{cor}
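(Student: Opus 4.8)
The plan is to reason directly from the optimization problem~(\ref{eq:opt}) characterizing $H_C^q(\Delta)$, viewing it as the minimization of $\tfrac1K\sum_k H^U(K\Delta_k)$ over the set $\{\Delta_k\ge 0,\ \tfrac1K\sum_k\Delta_k\le\Delta\}$. Before starting I would extend the function of Theorem~\ref{ThmPtP} by setting $H^U(\Delta)=0$ for $\Delta\ge 1/4$ — the single reconstruction point $1/2$ attains distortion $1/4$ with zero entropy — so that the arguments $K\Delta_k$ are permitted to exceed $1/4$. The extended $H^U$ is continuous, nonnegative, nonincreasing, and eventually constant, hence its convex envelope $\Breve H^U$ is well defined on all of $\R_{\ge 0}$ and is itself continuous, nonnegative, and nonincreasing.

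For the upper bound I would simply plug in the feasible choice $\Delta_k=\Delta$ for every $k$: the constraint $\tfrac1K\sum_k\Delta_k=\Delta$ holds with equality, and the objective evaluates to $\tfrac1K\sum_{k=1}^K H^U(K\Delta)=H^U(K\Delta)$, so the minimum is at most $H^U(K\Delta)$. Operationally this is the encoder that runs the point-to-point optimal quantizer of \cite{gyorgyUnif} separately on each length-$1/K$ subinterval with the same per-subinterval distortion budget.

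For the lower bound I would fix an arbitrary feasible collection $\{\Delta_k\}$ and chain three inequalities. First, $H^U\ge\Breve H^U$ pointwise by the definition of the convex envelope, so $\tfrac1K\sum_k H^U(K\Delta_k)\ge\tfrac1K\sum_k \Breve H^U(K\Delta_k)$. Second, convexity of $\Breve H^U$ together with Jensen's inequality gives $\tfrac1K\sum_k \Breve H^U(K\Delta_k)\ge\Breve H^U\!\bigl(\tfrac1K\sum_k K\Delta_k\bigr)=\Breve H^U\!\bigl(K\cdot\tfrac1K\sum_k\Delta_k\bigr)$. Third, since $\tfrac1K\sum_k\Delta_k\le\Delta$ and $\Breve H^U$ is nonincreasing, this is at least $\Breve H^U(K\Delta)$. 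Taking the infimum over feasible collections yields $H_C^q(\Delta)\ge\Breve H^U(K\Delta)$; combined with the previous paragraph this gives the stated sandwich, and recalling $H_{SI}^q=H_C^q$ from Theorem~\ref{thm:qach} extends the bounds to the decoder-only case.

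I do not expect a genuine obstacle here; the only points needing care are the domain extension of $H^U$ above, verifying that $\Breve H^U$ inherits monotonicity (which holds because no affine function of positive slope can remain below the eventually-zero function $H^U$, so every affine minorant has nonpositive slope), and noting that continuity of $H^U$ makes the infimum in~(\ref{eq:opt}) attained, justifying writing ``$\min$''. The conceptual content is that the nonconvexity of $H^U$ — its strict concavity on each interval $(\tfrac1{4(N+1)},\tfrac1{4N})$ — is exactly what separates the two bounds, and that an unequal allocation of the distortion budget across the $K$ subintervals lets the encoder exploit this concavity, but no further than the convex envelope.
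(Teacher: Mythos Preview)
Your proposal is correct and follows essentially the same route as the paper: the upper bound by the feasible choice $\Delta_k=\Delta$, and the lower bound by replacing $H^U$ with its convex envelope pointwise and then using convexity (you via Jensen, the paper via the equivalent observation that a symmetric convex problem is minimized at the symmetric point). Your added care about extending $H^U$ beyond $1/4$ and verifying monotonicity of $\Breve H^U$ to handle the inequality constraint is welcome polish that the paper leaves implicit.
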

\begin{proof}
    The upper bound can be obtained directly by setting $\Delta_k = \Delta$ for all $k \in \{1,\ldots,K\}$ for the optimization problem given in Theorem~\ref{thm:qach}. For the lower bound assume that $\Delta_k^*$ is a minimizer of the optimization problem in Theorem~\ref{thm:qach} ($k \in \{1,\ldots,K\}$). Then
    \begin{align*}
        H_C^q(\Delta) = \frac{1}{K} \sum_{k=1}^{K}H^U(K\Delta_k^*) \geq \frac{1}{K} \sum_{k=1}^{K}\Breve{H}^U(K\Delta_k^*).
    \end{align*}
    by definition of the convex envelope. Also
    \begin{equation*} \label{eq:opt}\begin{split}
            \frac{1}{K} \sum_{k=1}^{K}\Breve{H}^U(K\Delta_k^*) \geq \min_{\{\Delta_k\}_{k=1}^{K}} \; &\frac{1}{K}\sum_{k=1}^{K}\Breve{H}^U(K\Delta_k) \\
            \text{s.t } &\frac{1}{K}\sum_{k=1}^{K}\Delta_k \leq \Delta \\
            &\Delta_k \geq 0 \text{ for all } k\in \{1,\ldots,K\}
        \end{split}
    \end{equation*}
    This optimization problem is a symmetric and convex minimization problem and thus is $\Delta_k = \Delta$ for all $k$, is a minimizer \cite[Exercise 4.4]{boyd2004convex}. Thus
    \begin{align*}
        H_C^q(\Delta) \geq \Breve{H}^U(K\Delta)
    \end{align*}
    which completes the proof.
\end{proof}

The upper and lower bounds in Corollary~\ref{cor:q} meet at rates equal to $\log N$ for some integer $N\geq 1$ and do not meet for any other rate \cite{gyorgyUnif}. Furthermore, it can be shown that the lower bound is tight at $K+1$ points between the rates $\log N$ and $\log (N+1)$. This can be achieved by choosing $M$ out of $K$ distortions $\Delta_k$ such that $H^U(K\Delta_k) = \log N$ and rest of the $K-M$ distortions such that $H^U(K\Delta_k) = \log (N+1)$. As a result, the entropy will be $\frac{M}{K} \log N + \frac{K-M}{K} \log (N+1)$ and we will achieve the convex envelope.
\subsection{Noisy Side Information}
When the SI is of the form $Y_n = X + Z \text{ (mod 1)}$, given a realization of the SI $Y_n = y$, $X$ is distributed uniformly over the set
\begin{equation} \label{eq:By}
    B_y = \begin{cases}
        [y-\alpha/2,y+\alpha/2], \;\;\;\;\;\;\;\;\;\;\;\;\;\;\;\;\; \alpha/2 \leq y \leq 1-\alpha/2,& \\
        [0,1] \setminus  (y+\alpha/2-1,y-\alpha/2), \;\;\;\; 1-\alpha/2 < y \leq 1,& \\
        [0,1] \setminus (y+\alpha/2, y-\alpha/2 + 1), \;\;\;\;\;\;\;\;\;\; 0 \leq y < \alpha/2.&
    \end{cases}
\end{equation}
If the SI was available at both sides, it would be sufficient for the encoder to only consider the set $B_y$ for quantization for each $y$, and the problem would be similar to that with quantized SI. On the other hand, dividing $[0,1]$ into intervals with separate encoder indices for each interval is wasteful because the decoder can readily use $B_y$ to distinguish between intervals more than $\alpha$ apart. 

For the decoder-only SI case, similar to "binning" in the asymptotic Wyner--Ziv problem, our achievability result for noisy SI divides $[0,1]$ into many intervals and then groups intervals that are sufficiently apart into a single encoder output. The intervals are such that for each $y$, only one of the intervals in the group has a nonempty intersection with $B_y$. This grouping ensures that using the SI, the decoder can identify a single interval out of multiple disjoint intervals that correspond to the same encoder output.

\begin{thm}[Achievability for Noisy Side Information] \label{thm:nach}
    Consider the source $X\sim \text{Unif}[0,1]$ and SI $Y_n = X + Z \text{ (mod 1)}$ where $Z$ is independent of $X$ and distributed uniformly on $[-\alpha/2,\alpha/2]$ for some $\alpha \in [0,1/2]$. Then the entropy--distortion function with decoder-only SI $H_{SI}^{n}(\Delta)$ is upper bounded by
    \begin{align*}
        H_{SI}^{n}(\Delta) &\leq -\biggl\lfloor \frac{1}{p} \biggl\rfloor p \log p - q \log q
    \end{align*}
    where $q = (1 - \big\lfloor \frac{1}{p} \big\rfloor p)$ and $p\in(0,1-\alpha)$ is the solution to
    \begin{align*}
        12\Delta &= \biggl\lfloor \frac{1}{p} \biggl\rfloor \left(3p \left(\frac{p}{L} \wedge \alpha \right) - \frac{L}{\alpha}\left(\frac{p}{L} \wedge \alpha \right)^3\right) \\
        &+ \left(3q \left(\frac{q}{L} \wedge \alpha \right) - \frac{L}{\alpha}\left(\frac{q}{L} \wedge \alpha \right)^3\right)
    \end{align*}
    with $L = \lfloor (1-p)/\alpha\rfloor$ and $(a \wedge b) = \min(a,b)$.
\end{thm}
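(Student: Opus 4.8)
The plan is to prove the bound constructively: exhibit a single encoder--decoder pair whose entropy equals the right-hand side $-\lfloor1/p\rfloor p\log p-q\log q$ and whose distortion equals the value $\Delta$ determined by the displayed equation; Definition~\ref{def:dSI-ed} then gives the stated upper bound (and note the encoder below uses $x$ only, so it is a legitimate decoder-only-SI scheme). Fix $p\in(0,1-\alpha)$, set $N=\lfloor1/p\rfloor+1$, $q=1-\lfloor1/p\rfloor p\in[0,p)$ and $L=\lfloor(1-p)/\alpha\rfloor\ge1$, and regard $[0,1]$ as a circle, identifying $0$ with $1$. Partition the circle into $L$ equal \emph{blocks} of length $1/L$ and, inside each block, place in one common order $N-1$ intervals of length $p/L$ followed by one interval of length $q/L$ (these sum to $1/L$). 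For $g\in\{1,\dots,N\}$ let group $g$ consist of the $L$ intervals occupying the $g$-th slot across the blocks; each interval of group $g$ has length $p/L$ (or $q/L$ if $g=N$) and group $g$ has total length $p$ (or $q$). The encoder $f(x)$ outputs the group index of the interval containing $x$; given the received index $i$ and the realization $y$, the decoder outputs the midpoint of $B_y\cap I$, where $I$ is the interval of group $i$ that meets $B_y$.

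First I would verify the \emph{binning} property: for every $y$ and $g$, the set $B_y$ --- an arc of length $\alpha$ on the circle by (\ref{eq:By}) --- meets at most one interval of group $g$. Two cyclically consecutive intervals of group $g$ are separated by an arc of length $1/L$ minus the length of a group-$g$ interval, hence at least $1/L-p/L=(1-p)/L\ge\alpha$ by the choice of $L$ (for $g=N$ the gap $(1-q)/L$ is even larger); an arc of length $\alpha$ cannot straddle such a gap. Hence the decoder's $I$ is well defined, and, conditioned on $Y=y$ and $f(X)=i$, $X$ is uniform on $B_y\cap I$, so the midpoint is optimal. The entropy is immediate: $\P[f(X)=g]$ is the total length of group $g$, so the output law has $N-1$ masses $p$ and one mass $q$, giving $H(f)=-\lfloor1/p\rfloor p\log p-q\log q$, the claimed bound.

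For the distortion I would use the identity: given $Y=y$, $X\sim\text{Unif}(B_y)$, the event $\{X\in I\}$ has probability $|B_y\cap I|/\alpha$, and on it $X\sim\text{Unif}(B_y\cap I)$ while $\hat{X}$ is the midpoint of $B_y\cap I$, so its conditional mean absolute deviation is $\tfrac14|B_y\cap I|$ and $\E[\,|X-\hat{X}|\,\mathbf{1}\{X\in I\}\mid Y=y]=|B_y\cap I|^2/(4\alpha)$. Since the intervals partition the circle and $Y$ is uniform on $[0,1]$,
\begin{align*}
D_{SI}(f,g)=\sum_{I}\E\bigl[\,|X-\hat{X}|\,\mathbf{1}\{X\in I\}\bigr]=\sum_{I}\frac{1}{4\alpha}\int_0^1|B_y\cap I|^2\,dy=:\sum_{I}\Phi(|I|).
\end{align*}
For an interval of length $\ell$, with $\ell+\alpha<1$ (true since $\ell\le p<1-\alpha$) so that $B_y\cap I$ is always a single arc, the map $y\mapsto|B_y\cap I|$ is a trapezoidal pulse: it rises linearly to height $\ell\wedge\alpha$, stays there over an arc of length $|\ell-\alpha|$, then falls symmetrically, and a direct computation (using $\min(\ell,\alpha)\max(\ell,\alpha)=\ell\alpha$) gives $\Phi(\ell)=\tfrac1{12}\bigl(3\ell(\ell\wedge\alpha)-\tfrac1\alpha(\ell\wedge\alpha)^3\bigr)$. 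Since there are $L\lfloor1/p\rfloor$ intervals of length $p/L$ and $L$ of length $q/L$,
\begin{align*}
12\,D_{SI}(f,g)=\biggl\lfloor\frac1p\biggr\rfloor\Bigl(3p\bigl(\tfrac pL\wedge\alpha\bigr)-\tfrac L\alpha\bigl(\tfrac pL\wedge\alpha\bigr)^3\Bigr)+\Bigl(3q\bigl(\tfrac qL\wedge\alpha\bigr)-\tfrac L\alpha\bigl(\tfrac qL\wedge\alpha\bigr)^3\Bigr),
\end{align*}
which is exactly the right-hand side of the equation defining $p$. Choosing $p\in(0,1-\alpha)$ so that this equals $12\Delta$ produces a scheme with $D_{SI}(f,g)\le\Delta$ and entropy $-\lfloor1/p\rfloor p\log p-q\log q$; existence of such a $p$ in the relevant small-distortion range follows since the right-hand side above is continuous in $p$ on $(0,1-\alpha)$ and tends to $0$ as $p\to0^+$.

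The main obstacle, I expect, is the combinatorial design of the binning partition, not any analytic step. Shorter intervals lower the distortion, favoring many blocks, but the block count $L$ is capped by the gap requirement $(1-p)/L\ge\alpha$, forcing $L=\lfloor(1-p)/\alpha\rfloor$; and using a \emph{common} ordering of the groups within each block is what makes every inter-occurrence gap of a fixed group equal $1/L$ minus that group's interval length, so a single inequality certifies the binning property for all groups simultaneously. Once the partition is fixed and the identity $\E[\,|X-\hat{X}|\,\mathbf{1}\{X\in I\}\mid Y\,]=|B_Y\cap I|^2/(4\alpha)$ is in hand, what remains --- the trapezoid integral, the interval count, and the continuity argument for $p$ --- is routine.
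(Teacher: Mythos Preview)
Your proposal is correct and follows essentially the same construction as the paper: the periodic ``binned'' partition of $[0,1]$ into $L$ blocks with $N$ groups, the gap check $(1-p)/L\ge\alpha$, the midpoint decoder, and the trapezoidal integral for the distortion are all identical. The only cosmetic difference is that you compute the distortion by conditioning on $Y=y$ first (yielding $|B_y\cap I|^2/(4\alpha)$ directly), whereas the paper conditions on $X\in C_{i,\ell}$ and integrates against the trapezoidal density $f_{Y\mid X\in C_{i,\ell}}$; both routes produce the same integral $\tfrac{1}{4\alpha}\int_0^1|C\cap B_y|^2\,dy$ and hence the same final expression.
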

\begin{proof}
For any $p \in (0,1-\alpha)$, let $N = \big\lfloor \frac{1}{p} \big\rfloor +1$. Then consider an encoder that has range of $\{1,\ldots,N\}$ such that
\begin{align*}
    \P[f(X) = i] &= p \;\text{for}\; i = 1,\ldots,N-1 \\
    P[f(X) = N] &= 1 - \lfloor 1/p \rfloor p \triangleq q.
\end{align*}
The entropy of such an encoder is
\begin{align*}
    H(f) = -\lfloor1/p\rfloor p \log p - q \log q.
\end{align*}
Now for $L = \lfloor (1-p)/\alpha\rfloor$ which is the maximum number of intervals with combined size $p$ that can be placed at least $\alpha$ apart, consider the following quantization of the interval $[0,1/L]$:
\begin{align*}
    f(x) = \begin{cases}
        i, \quad  \frac{(i-1)p}{L} \leq x < \frac{ip}{L}, i = 1,\ldots,N-1 \\
        N, \quad  \frac{(N-1)p}{L} \leq x \leq \frac{1}{L}.
    \end{cases}
\end{align*}
For the rest of the interval we let $f$ be periodic, i.e. $f(x+\ell/L) = f(x)$ for $\ell = 1,\ldots,L-1$ and $x \in [0,1/L]$. Note that by this construction there are $L$ disjoint intervals on $[0,1]$ that are mapped to the same encoder index. Furthermore, these intervals are at least $\alpha$ apart.

Letting $A_i = \{x: f(x) = i\}$, by construction $A_i$ is given by the union of $L$ disjoint intervals of size $p/L$ for $i \in \{1,\ldots,N-1\}$ and size $q/L$ for $i=N$. For any $i$, let $A_i = \bigcup_{\ell=1}^{L}C_{i,\ell}$ where $C_{i,\ell}$ are the aforementioned intervals. Specifically
\begin{align*}
    C_{i,\ell} = \begin{cases}
        \left[\frac{(i-1)p}{L} + \frac{\ell-1}{L}, \frac{ip}{L} + \frac{\ell-1}{L} \right), \;\;\; &i \in \{1,\ldots,N-1\} \\
        \left[\frac{(N-1)p}{L} + \frac{\ell-1}{L}, \frac{\ell}{L} \right], \;\;\; &i = N.
    \end{cases}
\end{align*}
When index $i$ is received the decoder maps the encoder output to the mid-point of the intersection $A_i \cap B_y$. For any $y \in [0,1]$ and $i \in \{1,\ldots,N\}$, $C_{i,\ell} \cap B_y \neq \emptyset$ for at most one $\ell \in \{1,\ldots,L\}$. Hence $A_i \cap B_y$ is an interval for all $y \in [0,1]$ and $i \in \{1,\ldots,N\}$.

Now we compute the expected distortion of this encoder-decoder pair. We first note for an arbitrary interval $S = [a,b]$ with $0 \leq a < b \leq 1$ with $\hat{X} = (a+b)/2$ we get
\begin{align*}
    \E[|X-\hat{X}||X\in S] = \frac{1}{b-a}\int_a^{b}|x-(a+b)/2|dx = \frac{b-a}{4}
\end{align*}
We can write the expected distortion since $f(X)=i$ is equivalent to $X \in A_i$, we can write
\begin{align*}
    \E[|X-\hat{X}|] &= \sum_{i=1}^{N}\P(f(X)=i)\E[|X-\hat{X}| | X \in A_i].
\end{align*}
Now for each $A_i$ since $C_{i,\ell}$ are disjoint and equal sized,
\begin{align*}
    \E[|X-\hat{X}| | X \in A_i] = \frac{1}{L}\sum_{\ell = 1}^{L}\E[|X-\hat{X}| | X \in C_{i,\ell}]
\end{align*}
and each term of the summation:
\begin{align*}
    \E[&|X-\hat{X}||X \in C_{i,\ell}] = \int_{0}^{1}\E[|X-\hat{X}| | X \in C_{i,\ell},Y=y]\\
    &\qquad \qquad \qquad \qquad \qquad \qquad \qquad \quad \cdot f_{Y|X\in C_{i,\ell}}(y|C_{i,\ell})dy \\
    &= \int_{0}^{1}\frac{|C_{i,\ell}\cap B_y|}{4}f_{Y|X\in C_{i,\ell}}(y|C_{i,\ell})dy
\end{align*}
where $f_{Y|X\in C_{i,\ell}}(y|C_{i,\ell})$ is the conditional probability density function of $Y$ given $X \in C_{i,\ell}$. Now by symmetry of $Z$, shifting $C_{i,\ell}$ doesn't change the above integral. Hence without loss of generality assume $C_{i,\ell} = [\alpha/2,p/L+\alpha/2]$ for $i \in \{1,\ldots,N-1\}$ and the conditional probability density can be written as
\begin{align*}
    &f_{Y|X\in C_{i,\ell}}(y|C_{i,\ell}) = \\
    &\frac{L}{\alpha p} \begin{cases} y, \quad  &0 \leq y \leq (p/L \wedge \alpha) \\
        (p/L \wedge \alpha), \quad &(p/L \wedge \alpha) < y \leq (p/L \vee \alpha) \\
        \alpha + p/L - y, \quad &(p/L \vee \alpha) < y \leq p/L + \alpha \\
        0, \quad &\text{otherwise}.
    \end{cases}
\end{align*}
where $(a \vee b) = \max(a,b)$. Then combining with $B_y$ in (\ref{eq:By}) we can write
\begin{align*}
    \E[|X-\hat{X}||X \in C_{i,\ell}] = (p/L \wedge \alpha) - \frac{L}{3\alpha p}(p/L \wedge \alpha)^3.
\end{align*}
We can find a similar result for $i=N$ by replacing $p$ with $q$. Putting it all together we get the desired expected distortion $\Delta$.
\end{proof}
To obtain a converse on the entropy--distortion function with decoder-only SI, we assume the noisy SI is available at the encoder as well, and  use $H_{SI}^{n}(\Delta) \geq H_{C}^{n}(\Delta)$. 

\begin{thm}[Converse for Noisy Side Information] \label{thm:nconv}
    Consider the source $X\sim \text{Unif}[0,1]$ and SI $Y_n = X + Z \text{ (mod 1)}$ where $Z$ is independent of $X$ and distributed uniformly on $[-\alpha/2,\alpha/2]$ for some $\alpha \in [0,1/2]$. Then the entropy--distortion function with decoder-only SI $H_{SI}^{n}(\Delta)$ is lower bounded as
    \begin{equation} \begin{split}
        H_{SI}^{n}(\Delta) \geq \inf_{\delta(y)} \;&\int_{0}^{1} H^{U}\left(\frac{\delta(y)}{\alpha}\right)dy \label{eq:conv2} \\
        \text{s.t } \; &\int_{0}^{1}\delta(y)dy \leq \Delta \\
        &\delta(y) \geq 0 \text{ for all } y \in [0,1].
    \end{split}
    \end{equation}
    where $H^{U}(\cdot)$ is the entropy--distortion function of the uniform source given in Theorem~\ref{ThmPtP}.
\end{thm}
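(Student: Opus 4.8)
The plan is to condition on the side-information realization and reduce each conditional subproblem to the point-to-point converse behind Theorem~\ref{ThmPtP}. Since $H_{SI}^{n}(\Delta)\ge H_{C}^{n}(\Delta)$, it suffices to lower bound the conditional entropy--distortion function, so fix any conditional encoder $f(X,Y_n)$ and decoder $g(f(X,Y_n),Y_n)$ with $D_C(f,g)\le\Delta$. Note $Y_n\sim\text{Unif}([0,1])$ and, given $Y_n=y$, $X$ is uniform on the set $B_y$ of (\ref{eq:By}), which has Lebesgue measure $\alpha$ for every $y$. Writing $f_y(\cdot)=f(\cdot,y)$, $g_y(\cdot)=g(\cdot,y)$, $p_i(y)=\P[f_y(X)=i\mid Y_n=y]$ and $\delta(y)=\E[|X-\hat X|\mid Y_n=y]$, we have $H(f|Y_n)=\int_0^1\bigl(-\sum_i p_i(y)\log p_i(y)\bigr)\,dy$ and $\int_0^1\delta(y)\,dy=D_C(f,g)\le\Delta$, so $\delta(\cdot)$ is feasible for the program in (\ref{eq:conv2}). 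It then remains to establish $-\sum_i p_i(y)\log p_i(y)\ge H^{U}\!\bigl(\delta(y)/\alpha\bigr)$ for almost every $y$.

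For fixed $y$, let $A_i=\{x\in B_y:f_y(x)=i\}$, so $|A_i|=\alpha\,p_i(y)$, and let $\hat x_i=g_y(i)\in\hat{\mathcal X}\subseteq\R$. The geometric fact I would use is that among all measurable sets of a prescribed measure $m$, the quantity $\int_A |x-c|\,dx$ is minimized by the centered interval $[c-m/2,c+m/2]$ (a sublevel set of $x\mapsto|x-c|$), giving $\int_A|x-c|\,dx\ge m^2/4$; crucially this does not require $B_y$, nor $A_i$, to be an interval, which is what lets us treat the wrap-around cases of (\ref{eq:By}) uniformly. Hence $\E[|X-\hat x_i|\mid X\in A_i]\ge|A_i|/4=\alpha\,p_i(y)/4$, and summing over $i$, $\delta(y)=\sum_i p_i(y)\,\E[|X-\hat x_i|\mid X\in A_i]\ge\frac{\alpha}{4}\sum_i p_i(y)^2$, i.e. $\sum_i p_i(y)^2\le 4\delta(y)/\alpha$.

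Next I would invoke the converse argument behind Theorem~\ref{ThmPtP} from \cite{gyorgyUnif}: for a uniform source, any quantizer with cell probabilities $\{p_i\}$ satisfies exactly this type of inequality, and $H^{U}(\Delta)$ is precisely the minimum of $-\sum_i p_i\log p_i$ over all probability vectors obeying $\sum_i p_i^2\le 4\Delta$, attained by $\lfloor 1/p\rfloor$ masses of size $p$ together with one mass $q$ as in the theorem. Applying this with parameter $\delta(y)/\alpha$ gives $-\sum_i p_i(y)\log p_i(y)\ge H^{U}\!\bigl(\delta(y)/\alpha\bigr)$ (when $\delta(y)\ge\alpha/4$ this reads $\ge 0$, consistent with the convention $H^{U}\equiv0$ on $[1/4,\infty)$). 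Integrating over $y$ yields $H(f|Y_n)\ge\int_0^1 H^{U}\!\bigl(\delta(y)/\alpha\bigr)\,dy$, and taking the infimum over all feasible $(f,g)$---whose induced $\delta(\cdot)$ ranges over nonnegative functions with integral at most $\Delta$---gives (\ref{eq:conv2}).

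I expect the main obstacle to be pinning down that last step with the right granularity: I need $H^{U}(\Delta)$ to coincide with the value of the entropy-minimization problem under the single constraint $\sum_i p_i^2\le 4\Delta$, not merely to be a lower bound of some weaker flavor; fortunately this is exactly the route \cite{gyorgyUnif} takes to obtain Theorem~\ref{ThmPtP}, so it is available off the shelf. The remaining points are routine: measurability of $y\mapsto\delta(y)$ and $y\mapsto -\sum_i p_i(y)\log p_i(y)$ and the disintegration of $H(f|Y_n)$ and $D_C(f,g)$ into integrals over $y$, both following from the measurable structure imposed on $f$ and $g$, and the degenerate case $\alpha=0$ (the decoder knows $X$ and both sides vanish), which is handled separately.
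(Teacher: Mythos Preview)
Your proof is correct, and the overall skeleton---lower bound $H_{SI}^n$ by $H_C^n$, condition on $Y_n=y$, reduce each conditional subproblem to the point-to-point converse, then optimize over $\delta(\cdot)$---matches the paper's. The difference is in the key step handling the conditional law of $X$ given $Y_n=y$.

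The paper deals with the wrap-around case of $B_y$ (two disjoint intervals) by an explicit ``shift'' construction: given any encoder on $[0,a]\cup[b,c]$, it builds an encoder on the single interval $[0,a+(c-b)]$ with the same cell probabilities and shows, via a median/case analysis, that the $L_1$ distortion can only decrease. Having reduced to a single interval of length $\alpha$, it then invokes Theorem~\ref{ThmPtP} as a black box through the scaling relation~(\ref{eq_scu}) to get $H(f_y)\ge H^U(\delta(y)/\alpha)$.

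You instead bypass the interval reduction entirely with the bathtub inequality $\int_A|x-c|\,dx\ge |A|^2/4$, which holds for arbitrary measurable $A\subseteq\R$ and therefore treats the single- and two-interval cases of $B_y$ uniformly. This yields $\sum_i p_i(y)^2\le 4\delta(y)/\alpha$ directly, after which you invoke the \emph{internal} characterization from \cite{gyorgyUnif} that $H^U(\Delta)=\min\{-\sum p_i\log p_i:\sum p_i^2\le 4\Delta\}$. Your route is cleaner and avoids the case analysis, at the cost of relying on the structure of the proof in \cite{gyorgyUnif} rather than just the statement of Theorem~\ref{ThmPtP}; the paper's route is more self-contained in that sense but pays for it with the explicit two-interval comparison.
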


\begin{proof}
See Appendix~\ref{app:nconv}
\end{proof}

Similar to the lower bound in Corollary~\ref{cor:q}, we can further relax the converse in Theorem~\ref{thm:nconv} and obtain a simpler bound.

\begin{cor} \label{cor:n}
    The entropy--distortion function with decoder-only SI for the uniform source with noisy SI, $H_{SI}(\Delta)$ can be lower bounded as
    \begin{align*}
         H_{SI}^{n}(\Delta) \geq \Breve{H}^{U}(\Delta/\alpha)
    \end{align*}
    where $\Breve{H}^{U}(\cdot)$ is the convex envelope of $H^{U}(\cdot)$ given in Theorem~\ref{ThmPtP}.
\end{cor}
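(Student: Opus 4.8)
The plan is to start from the variational lower bound in Theorem~\ref{thm:nconv} and relax the integrand down to its convex envelope, exactly mirroring the argument used for the lower bound in Corollary~\ref{cor:q}. Concretely, let $\delta^*(y)$ be a (near-)minimizer of the optimization in \eqref{eq:conv2}. By definition of the convex envelope $\Breve{H}^U \leq H^U$ pointwise, so
\begin{align*}
    \int_0^1 H^U\!\left(\frac{\delta^*(y)}{\alpha}\right)dy \;\geq\; \int_0^1 \Breve{H}^U\!\left(\frac{\delta^*(y)}{\alpha}\right)dy.
\end{align*}
This shows $H_{SI}^n(\Delta)$ is bounded below by the same optimization problem but with $\Breve{H}^U$ replacing $H^U$ in the objective, over the same feasible set $\{\delta(y)\geq 0,\ \int_0^1\delta(y)\,dy\leq\Delta\}$.

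Next I would argue that this relaxed problem is minimized by the constant function $\delta(y)\equiv\Delta$. Since $\Breve{H}^U$ is convex, Jensen's inequality gives, for any feasible $\delta$,
\begin{align*}
    \int_0^1 \Breve{H}^U\!\left(\frac{\delta(y)}{\alpha}\right)dy \;\geq\; \Breve{H}^U\!\left(\frac{1}{\alpha}\int_0^1 \delta(y)\,dy\right) \;\geq\; \Breve{H}^U\!\left(\frac{\Delta}{\alpha}\right),
\end{align*}
where the last step uses that $\Breve{H}^U$ is non-increasing (the entropy--distortion function decreases in the distortion) together with $\int_0^1\delta(y)\,dy\leq\Delta$. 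Chaining the two displays yields $H_{SI}^n(\Delta)\geq\Breve{H}^U(\Delta/\alpha)$, which is the claim. One should note that $H^U(\cdot)$ in Theorem~\ref{ThmPtP} is stated for $0<\Delta\le 1/4$; for the argument one extends $H^U$ (and hence $\Breve{H}^U$) to be $0$ for arguments $\geq 1/4$, consistent with the fact that distortion $1/4$ is achievable with a single reconstruction point, so monotonicity and convexity of the envelope still hold on all of $\R_{\geq 0}$.

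I do not expect a genuine obstacle here, since the structure is identical to Corollary~\ref{cor:q}; the only points requiring minor care are (i) justifying that $\Breve{H}^U$ is non-increasing — which follows because $H^U$ is non-increasing, so any affine function lying below $H^U$ can be replaced by a non-increasing one lying below it, hence the envelope is non-increasing — and (ii) the use of Jensen over an infinite-dimensional feasible set, which is routine for the convex function $\Breve{H}^U$ and an integrable $\delta$. Alternatively, instead of Jensen one can cite \cite[Exercise 4.4]{boyd2004convex} as in Corollary~\ref{cor:q}, viewing the relaxed problem as a symmetric convex program whose optimum is attained at the "equal-allocation" point $\delta(y)\equiv\Delta$; either route gives the result in a couple of lines.
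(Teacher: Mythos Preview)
Your proposal is correct and follows essentially the same route as the paper: relax $H^U$ to its convex envelope $\Breve{H}^U$ inside the variational bound of Theorem~\ref{thm:nconv}, and then argue that the resulting convex problem is minimized at the constant allocation $\delta(y)\equiv\Delta$. The only cosmetic difference is that you invoke Jensen's inequality (plus monotonicity of $\Breve{H}^U$) directly, whereas the paper verifies optimality of $\delta(y)\equiv\Delta$ via the first-order (supporting hyperplane) inequality for convex functions; these are equivalent one-line arguments.
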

\begin{proof}
The proof is similar to that of Corollary~\ref{cor:q} and is provided in  Appendix~\ref{app:corn}. 
\end{proof}

\begin{figure}[t]
\centering
\begin{subfigure}[b]{0.5\textwidth}
   \includegraphics[width=1\linewidth]{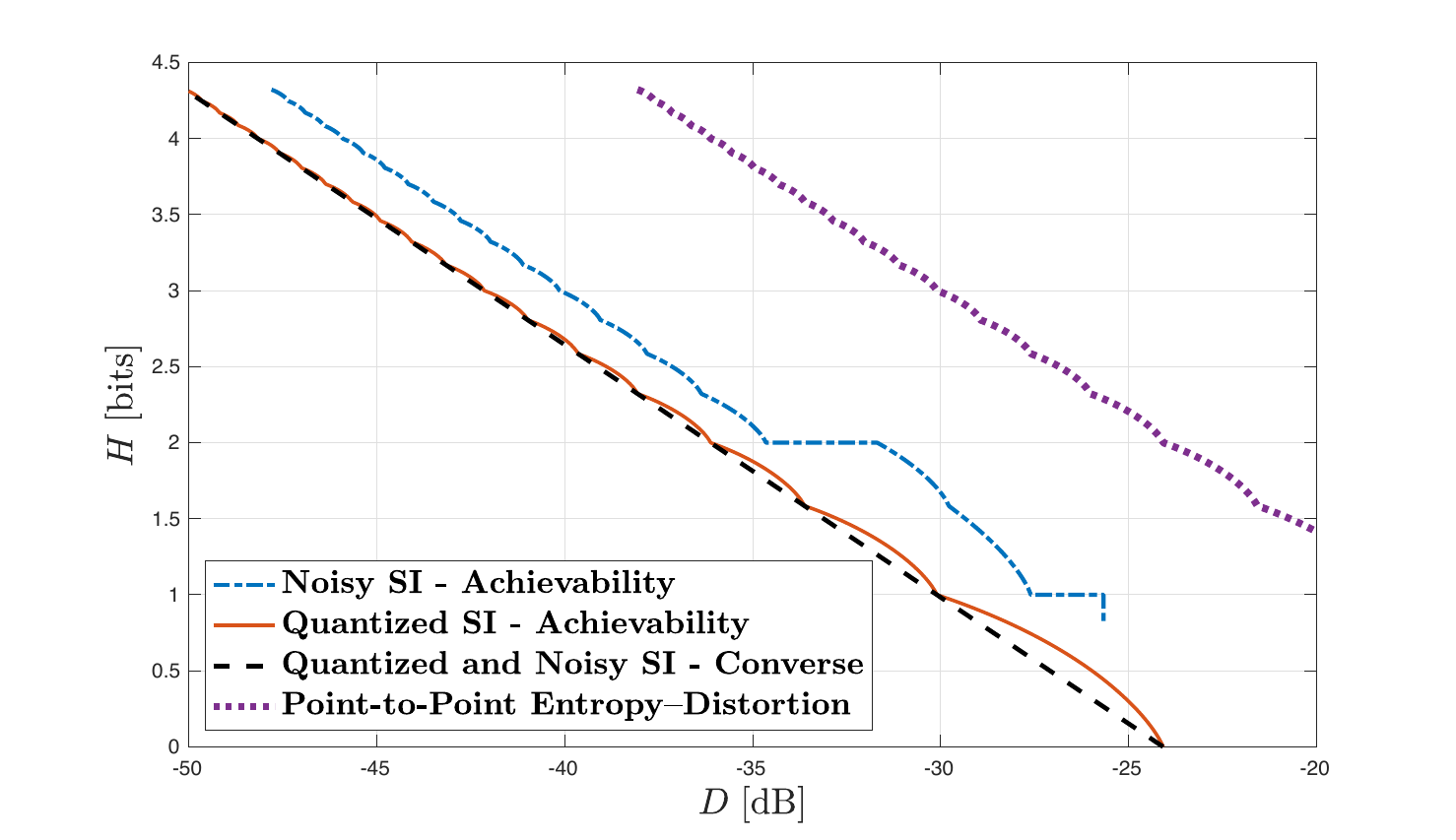}
   \caption{}
   \label{fig:ex1} 
\end{subfigure}
\begin{subfigure}[b]{0.5\textwidth}
   \includegraphics[width=1\linewidth]{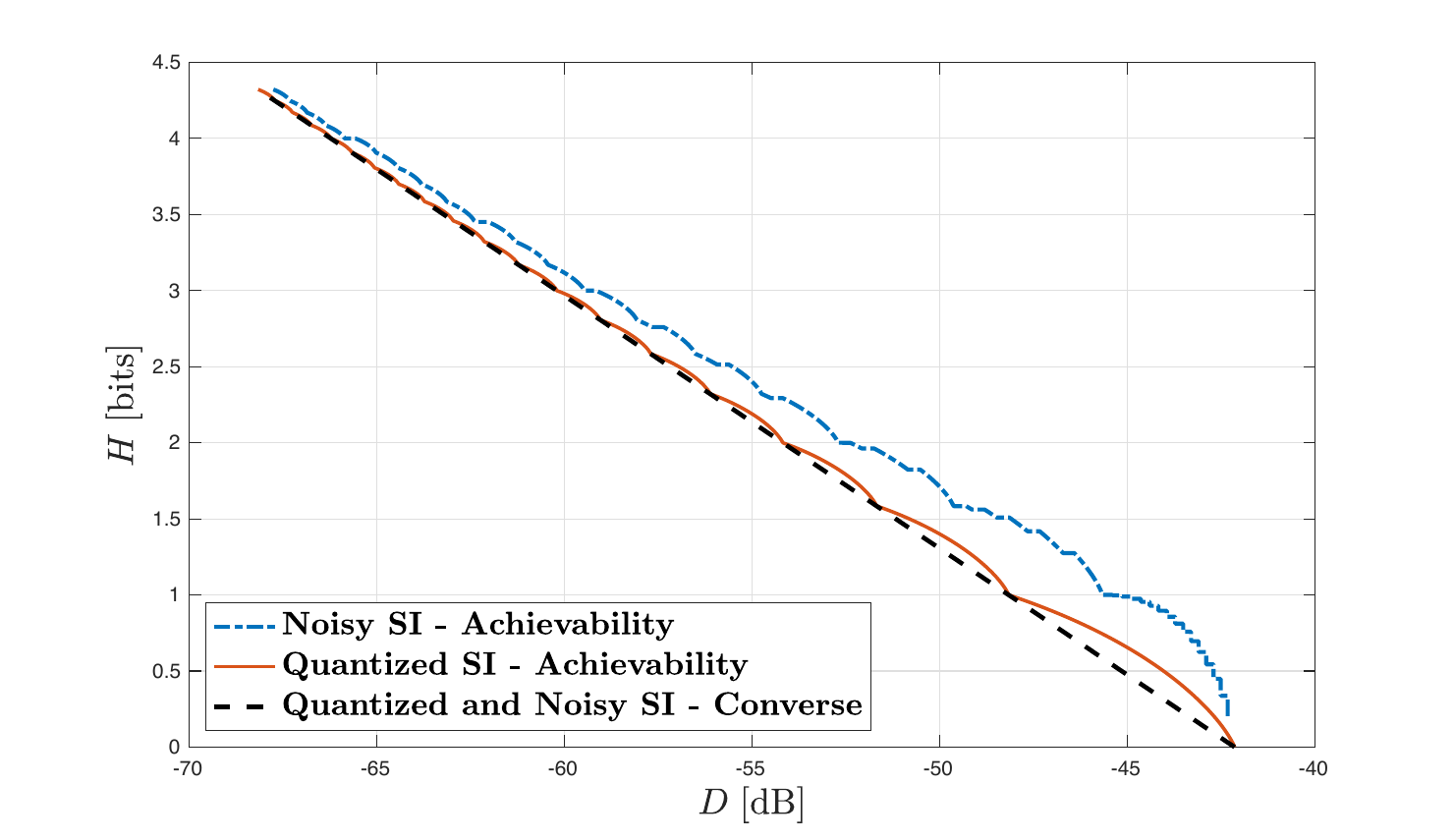}
   \caption{}
   \label{fig:ex2}
\end{subfigure}
\caption{Upper and lower bounds of entropy--distortion trade-off for quantized and noisy SI case for (a) $K=1/\alpha=4$  and (b) $K=1/\alpha=32$.}
\label{fig:exb}
\end{figure}

\section{Illustrations and Discussion} \label{sec:disc}
We first note that the lower bounds in Corollary~\ref{cor:q} and~\ref{cor:n} are in the same form. In fact when $\alpha = 1/K$, these bounds are equal. The case with $\alpha = 1/K$ is of special interest because for both noisy and quantized SI models, this corresponds to $X$ given $Y$ being uniformly distributed on a set with size $1/K$. In the case of quantized SI, these sets are uniformly placed and disjoint on the unit interval, and in the noisy SI case the set moves continuously as the value of the SI $Y$ changes. 

Fig.~\ref{fig:exb} shows comparison of upper bounds on $H_{SI}^{n}(\Delta)$ (Theorem~\ref{thm:nach}) and $H_C^{q}(\Delta)$ (Corollary~\ref{cor:q}), and the lower bound for both (Corollary~\ref{cor:q} and~\ref{cor:n}) for the cases where $K=1/\alpha=4$  (Fig.~\ref{fig:ex1}) and $K=1/\alpha=32$ (Fig.~\ref{fig:ex2}). Note that $H^U(\Delta)$ is strictly concave between rates $(\log N,\log (N+1))$ for all integers $N\geq 1$ and it only meets its convex envelope $\Breve{H}^U(\Delta)$ at rates $\log N$, resulting in the difference between the lower and upper bounds of $H_C^{q}(\Delta)$ given in Corollary~\ref{cor:q}. The gap between these bounds gets smaller at higher rates. Also as $K$ increases, i.e. the correlation between $X$ and $Y$ increases, the difference between upper and lower bounds decreases at high rates. Similarly, the achievability result in Theorem~\ref{thm:nach} for $H_{SI}^{n}(\Delta)$, is closer to the lower bound at higher rates and has a smoother behavior. Due the nature of the construction we have used in the achievability of $H_{SI}^{n}(\Delta)$, a change in the number of intervals per encoder index $(L)$ causes abrupt changes in the resultant distortion. In the case of $K=4$ shown in Fig.~\ref{fig:ex1}, at high rates we start with $L=\lfloor(1-p)K\rfloor=3$, and the boundaries for $L=2$ and $L=1$ can be seen as horizontal lines. As expected there is a substantial difference between entropy--distortion functions of point-to-point case and the ones with SI.

To conclude, in this work we have investigated a one-shot Wyner--Ziv problem for uniform sources with two SI models. We have presented upper and lower bounds for the entropy--distortion functions and showed that they get tighter at higher rates. We believe that potentially, these results can be used to investigate entropy--distortion trade-off for more complex processes as in \cite{wagnerRamp,wagnerSawbridge} in the setting with SI.

\bibliographystyle{IEEEtran}
\bibliography{ref}

\appendices

\section{Proof of Theorem~\ref{thm:nconv}} \label{app:nconv}
It is straightforward to see for all $\Delta > 0$, $H_{SI}^{n}(\Delta) \geq H_C^{n}(\Delta)$. Now we will bound $H_C^{n}(\Delta)$ for the noisy SI setting. Similar to the proof of Theorem~\ref{thm:qach}, for any encoder and decoder of the conditional entropy--distortion with quantized SI problem $f(X,Y_n),g(f(X,Y_n),Y_n)$, let $f_y(X) = f(X,Y_n=y)$ and $g_y(f_y(X)) = g(f(X,Y_n=y),Y_n=y)$. Then the conditional entropy and distortion of $f,g$ can be written as
\begin{align*}
    H(f|Y) &= \int_{0}^{1}H^U(f_y)dy \\
    D(f,g) &= \int_{0}^{1}\E[d(X,\hat{X})|Y_n=y]dy = \int_{0}^{1}D(f_y,g_y)dy
\end{align*}
Note that the pair $(f_y,g_y)$ is tasked with quantizing a uniform random variable ($X$ given $Y_n=y$) over the interval $B_y$ in (\ref{eq:By}) with $|B_y|=\alpha$. For $\alpha/2 \leq y \leq 1-\alpha/2$, $B_y$ is a single interval but for other values of $y$, $B_y$ takes the form of a union of two disjoint intervals. Consequently, we first show that the optimal achievable entropy--distortion function for a uniform source over two disjoint intervals is lower bounded by entropy--distortion function of a uniform source over a single interval with the same length.

Suppose that $X\sim\text{Unif}([0,a]\cup[b,c])$ where $c\geq b \geq a$. Let $f$ be any encoder that quantizes $[0,a]\cup[b,c]$. Then for some encoding index $i$, denote $S_i = \{x : f(x)=i\}$ with $\P[X \in S_i] = p_i$, and the decoder output $g(i)=t_i$. For $L_1$ distortion metric, the median of $S_i$ minimizes the expected distortion $\E[|X-t_i||X\in S_i]$, hence $t_i$ is set to be the median of $S_i$, i.e $|\{x: x \leq t_i, x \in S_i\}| = |\{x: x \geq t_i, x \in S_i\}|$. Note that $t_i$ does not need to be unique or belong to the set $[0,a]\cup[b,c]$, and if it is not unique then all medians give the same expected distortion. Contribution of this encoding index to $H(f)$ is $-p_i\log(p_i)$ and its contribution to the the distortion is
\begin{align*}
    \E&[|X-g(f(X))||X \in S_i] = \E[|X-t_i||X \in S_i] \\
    &=\frac{1}{p_i}\int_{S_i \cap [0,a]}|x-t_i|dx + \frac{1}{p_i}\int_{S_i \cap [b,c]}|x-t_i|dx
\end{align*}

Now let, $\Tilde{X}\sim\text{Unif}([0,a+(c-b)])$ which has the same length as the support of $X$ but in a single interval. Further let $\Tilde{S}_i= (S_i\cap[0,a])\cup(\{x : (x+ (b-a)) \in (S_i \cap [b,c] \})$ so $\Tilde{S}_i \subseteq[0,a+(c-b)]$ and $\P[\Tilde{X} \in \Tilde{S_i}] = p_i$. Define an encoder $\Tilde{f}$, such that $\Tilde{f}(\Tilde{x}) = i$ iff $x \in \Tilde{S}_i$. Repeating this for all $i$, we get an encoder $\Tilde{f}$ that compresses $\Tilde{X}$ over the interval $[0,a+(b-c)]$. Since $\P[X \in S_i] = \P[\Tilde{X} \in \Tilde{S}_i]$ for all $i$, $H(f)=H(\Tilde{f})$. 

Denote the median of $\Tilde{S_i}$ by $\Tilde{t}_i$. Then if $t_i \in [0,a] \Rightarrow \Tilde{t}_i = t_i$ and the expected distortion induced by $\Tilde{S}_i$
\begin{align*}
    \E&[|\Tilde{X}-\Tilde{t}_i||\Tilde{X} \in \Tilde{S}_i] = \frac{1}{p_i}\int_{ \cap [0,a+(c-b)]}|x-\Tilde{t}_i|dx\\
    &=\frac{1}{p_i}\int_{\Tilde{S}_i \cap [0,a]}|x-\Tilde{t}_i|dx + \frac{1}{p_i}\int_{\Tilde{S}_i \cap [a,a+(c-b)]}(x-\Tilde{t}_i)dx \\
    &\leq \frac{1}{p_i}\int_{\Tilde{S}_i \cap [0,a]}|x-t_i|dx + \frac{1}{p_i}\int_{S_i \cap [b,c]}(x-t_i)dx \\
    &= \E[|X-t_i||X \in S_i]
\end{align*}
where the inequality follows from the fact that $\Tilde{t}_i = t_i$ and $S_i \cap [b,c]$ is obtained by shifting $\Tilde{S}_i \cap [a,a+(c-b)]$ to right by $b-a$. Similarly, in the case $t_i \in [b,c]$ we have $a \leq \Tilde{t}_i = t_i -(b-a)$. Hence

\begin{align*}
    \frac{1}{p_i}\int_{\Tilde{S}_i \cap [0,a]}|x-\Tilde{t}_i|dx &= \frac{1}{p_i}\int_{\Tilde{S}_i \cap [0,a]}(\Tilde{t}_i-x)dx \\
    \leq \frac{1}{p_i}\int_{S_i \cap [0,a]}(t_i-x)dx &= \frac{1}{p_i}\int_{S_i \cap [0,a]}|t_i-x|dx
\end{align*}
since $a \leq \Tilde{t}_i \leq t_i$  and
\begin{align*}
    \frac{1}{p_i}\int_{S_i \cap [b,c]}|x-t_i|dx =  \frac{1}{p_i}\int_{\Tilde{S}_i \cap [a,a+(c-b)]}|x-\Tilde{t}_i|dx
\end{align*}
by change of variable $x$ to $x + (b-a)$ of left hand side. We do not need to consider the case $t_i \notin [0,a]\cup [b,c]$ since we can just change it to $t_i = a$ or $t_i=b$ stemming from the fact that any choice of median results in the same distortion for $L_1$ distortion metric. Hence we have shown that for any encoder-decoder pair that works for the union of two disjoint intervals, we can construct encoder-decoder pair that compresses a single interval of the same size as the union of the two intervals with the same entropy and lower distortion.

So for all $B_y$, under the constraint $D(f_y,g_y) = \delta(y) \geq 0$, by (\ref{eq_scu}) the entropy $H(f_y)$ is lower bounded by $H^U(\delta(y)/\alpha)$. So given any integrable function $\delta(y)$, optimal achievable entropy and distortion are lower bounded by
\begin{align*}
    H(f|Y) &\geq  \int_{0}^{1}H^U(\delta(y)/\alpha)f_Y(y)dy = \int_{0}^{1}H^U(\delta(y)/\alpha)dy \\
    D(f,g) &=  \int_{0}^{1}\delta(y)dy
\end{align*}
Optimizing over all integrable $\delta(y)$ satisfying $D(f,g) \leq \Delta$ completes the proof.

\section{Proof of Corollary \ref{cor:n}}  \label{app:corn}
    Similar to the proof of Corollary~\ref{cor:q} we can lower bound the optimization problem with a convex one to obtain
    \begin{equation*} \label{eq:conv}\begin{split}
        H_{SI}^{n}(\Delta) \geq \inf_{\delta(y)} \;&\int_{0}^{1} \Breve{H}^{U}\left(\frac{\delta(y)}{\alpha}\right)dy \\
        \text{s.t } \; &\int_{0}^{1}\delta(y)dy \leq \Delta \\
        &\delta(y) \geq 0 \text{ for all } y \in [0,1].
    \end{split}
    \end{equation*}
    By symmetry, we claim that a constant function $\delta(y)=\Delta$ minimizes the above convex optimization problem. Then for any integrable $\delta'(y)$ satisfying the constraints we can write
    \begin{align*}
        \int_{0}^{1} \Breve{H}^{U}\left(\frac{\Delta}{\alpha}\right)&dy - \int_{0}^{1} \Breve{H}^{U}\left(\frac{\delta'(y)}{\alpha}\right)dy  \\
        &= \int_{0}^{1} \left(\Breve{H}^{U}\left(\frac{\Delta}{\alpha}\right) - \Breve{H}^{U}\left(\frac{\delta'(y)}{\alpha}\right)\right)dy \\
        &\leq \int_{0}^{1}\frac{\partial\Breve{H}^{U}(\Delta/\alpha)}{\partial(\Delta/\alpha)}\left(\frac{\Delta}{\alpha}-\frac{\delta'(y)}{\alpha}\right)dy = 0.
    \end{align*}
    where last line follows from convexity of $\Breve{H}^{U}(\cdot)$ and the constraints. 
\end{document}